\let\NAT@parse\undefined
\pgfplotsset{compat=1.3}
\newtheorem{lemma}{Lemma}
\newtheorem{definition}{Definition}
\def\blfootnote{\xdef\@thefnmark{}\@footnotetext}
\begin{document}

%--------------------------------------------------------------------------------------
% Title
%--------------------------------------------------------------------------------------
\title{\huge{An Extension of the $\kappa$-$\mu$ Shadowed Fading Model: Statistical Characterization and Applications}}

%--------------------------------------------------------------------------------------
% Author
%--------------------------------------------------------------------------------------
\author{Pablo Ramirez-Espinosa, F. Javier Lopez-Martinez, Jose F. Paris, Michel D. Yacoub, Eduardo Martos-Naya}
%\author{% <-this % stops a space
%\thanks{Thanks line in the end of the first column.}% <-this % stops a space
%\thanks{Another line of thanks.}
%}

%--------------------------------------------------------------------------------------
% Paper header
%--------------------------------------------------------------------------------------
%\markboth{Journal of \LaTeX\ Class Files,~Vol.~6, No.~1, January~2007}%
%{Shell \MakeLowercase{\textit{et al.}}: Bare Demo of IEEEtran.cls for Journals}

\maketitle
\blfootnote{\noindent P. Ramirez-Espinosa, F. J. Lopez-Martinez, J. F. Paris and E. Martos-Naya are with Departmento de Ingenier\'ia de Comunicaciones, Universidad de Malaga - Campus de Excelencia Internacional Andaluc\'ia Tech., Malaga 29071, Spain. E-mail: \{pre,fjlopezm,paris,eduardo\}@ic.uma.es. This work has been funded by the Consejer\'ia de Econom\'ia, Innovaci\'on, Ciencia y Empleo of the Junta de Andaluc\'ia, the Spanish Government
and the European Fund for Regional Development FEDER (projects P2011-TIC-7109, P2011-TIC-8238, and TEC2014-57901-R).
 \\ \indent M. D. Yacoub is with the Wireless Technology Laboratory, School of Electrical and Computer Engineering, University of Campinas, Campinas 13083-970, Brazil. E-mail michel@decom.fee.unicamp.br.\\
\indent This work has been submitted to the IEEE for publication. Copyright may be transferred without notice, after which this version may no longer be accesible.
}

%--------------------------------------------------------------------------------------
% Abstract
%--------------------------------------------------------------------------------------
\begin{abstract}
We here introduce an extension and natural generalization of both the $\kappa$-$\mu$ shadowed and the classical Beckmann fading models: the Fluctuating Beckmann (FB) fading model. This new model considers the clustering of multipath waves on which the line-of-sight ({{LoS}}) components randomly fluctuate, together with the effect of in-phase/quadrature power imbalance in the {{LoS}} and non-{{LoS}} components. Thus, it unifies a variety of important fading distributions as the one-sided Gaussian, Rayleigh, Nakagami-$m$, Rician, $\kappa$-$\mu$, $\eta$-$\mu$, $\eta$-$\kappa$, Beckmann, Rician shadowed and the $\kappa$-$\mu$ shadowed distribution. The chief probability functions of the FB fading model, namely probability density function, cumulative distribution function and moment generating function are {derived}. The second-order statistics such as the level crossing rate and the average fade duration are also analyzed. {{These results can be used to}} derive some performance metrics of interest of wireless communication systems operating over FB fading channels. 
\end{abstract}
%{{}}
%--------------------------------------------------------------------------------------
% Key words
%--------------------------------------------------------------------------------------
\begin{keywords}
Fading channels, Beckmann, Rayleigh, Nakagami-$m$, Rician, $\kappa$-$\mu$, 
Rician shadowed, $\kappa$-$\mu$ shadowed.
\end{keywords}

%---------------------------------------------------------------------------------------
% INTRODUCTION SECTION 
%---------------------------------------------------------------------------------------
\section{Introduction}

{{}In wireless environments, the radio signal is affected by a number of random phenomena including reflection (both specular and diffuse), diffraction, and scattering as they travel from transmitter to receiver, giving rise to the so-called multipath propagation. At the receiver, the resulting signal appears as a linear combination of the multipath waves, each of which with their own amplitudes and phases.} When the number of paths is sufficiently large, the complex baseband signal can be regarded as Gaussian because of the Central Limit Theorem (CLT). Depending on the choice of {{the}} parameters characterizing this complex Gaussian random variable, namely the mean and variance of the in-phase and quadrature components, different fading models emerge: Rayleigh (zero-mean and equal variances), Hoyt (zero-mean and unequal variances) and Rice (non-zero mean, equal variances), which are perhaps the most popular fading models arising from the CLT assumption \cite{Rice1944,Hoyt1947}. 

The most general case (i.e. unequal means and variances for the in-phase and quadrature components) was considered by Beckmann \cite{Beckmann1962,Beckmann1964} when characterizing the scattering from rough surfaces. However, its {{greater flexibility}} comes at the price of an increased mathematical complexity; {in fact, its chief probability functions (PDF and CDF) are known to be given in infinite-series form 
expression \cite{AlouiniBook}}, as opposed to Rayleigh, Hoyt and Rician models. {{}Other models characterizing the joint effects of imbalances in the mean and variance between in-phase and quadrature components whose PDF and CDF are given in infinite-series form are the so called $\eta$-$\kappa$ \cite{Yacoub05,Yacoub05a} and the very recently proposed $\alpha$-$\eta$-$\kappa$-$\mu$ \cite{Yacoub2016}}

In order to provide a better statistical characterization of the received radio signal in multipath environments, some alternative models have been proposed as generalizations of classical Rayleigh, Hoyt and Rician. By means of considering the effect of clustering of multipath waves, two new fading models arise \cite{Yacoub07}: the $\eta$-$\mu$ fading model as a generalization of Hoyt model, well-suited for non line-of-sight ({{NLoS}}) propagation environments, and the $\kappa$-$\mu$ fading model as a generalization of Rice model in line-of-sight\footnote{{{}Line-of-sight is used here to mean the more precise phenomenon concerning the presence of dominant components.}} ({{LoS}}) scenarios. These models have become of widespread use in the recent years because of their flexibility and relatively simple mathematical tractability, as their chief probability functions (PDF, CDF and MGF) are given in closed-form \cite{Yacoub07,Morales2010,Ermolova2008}. Besides, both models also include the versatile and popular Nakagami-$m$ model as particular case \cite{Nakagami1960}.

A further generalization of these models was introduced in \cite{Paris2014} and \cite{Cotton2015} under the name of $\kappa$-$\mu$ shadowed fading distribution. This new distribution provides an additional degree of freedom compared to the $\kappa$-$\mu$ distribution by allowing the {{LoS}} component to randomly fluctuate. Notably, the $\kappa$-$\mu$ shadowed fading model includes both the $\kappa$-$\mu$ and $\eta$-$\mu$ models \cite{Laureano2016} as special cases, as well as the Rician shadowed fading model \cite{Abdi2003}. Thus, most popular fading models in the literature for {{LoS}} and {{NLoS}} conditions are unified under the umbrella of the $\kappa$-$\mu$ shadowed fading channel model. This {{greater flexibility}} does not come at the price of an increased mathematical complexity; in fact, in some cases its PDF and CDF admit a representation in terms of a finite number of powers and exponentials, thus becoming even as tractable as the Nakagami-$m$ distribution \cite{Lopez2017}.

Even though the $\kappa$-$\mu$ shadowed fading model succeeds on capturing different propagation phenomena such as clustering and {{LoS}} fluctuation, it fails when it comes to accounting for the effect of power imbalance in the {{LoS}} and {{NLoS}} components as originally considered by Beckmann \cite{Beckmann1962,Beckmann1964}. Motivated by this issue, in this paper we introduce an extended $\kappa$-$\mu$ shadowed fading model which effectively captures such propagation conditions. This new model can be regarded as a generalization of the original fading model in \cite{Paris2014}, but also as a generalization of Beckmann fading model by also including the effects of clustering and {{LoS}} fluctuation. For this reason, and for the sake of notational brevity, we deem appropriate to name it as the Fluctuating Beckmann (FB) fading model (or equivalently, fading distribution).

The FB model includes as special cases an important set of fading distributions as the one-sided Gaussian, Rayleigh, Nakagami-$m$, Rician, $\kappa$-$\mu$, $\eta$-$\mu$, $\eta$-$\kappa$, Beckmann, Rician shadowed and the $\kappa$-$\mu$ shadowed distributions. 

{{}Interestingly, the CDF and PDF of the FB fading model are given in terms of a well-known function in the context of communication theory, having a functional form similar to the original $\kappa$-$\mu$ shadowed fading model.} The randomization of the {{LoS}} component allows for including an additional degree of freedom when compared to the Beckmann model. We provide a full statistical characterization of the FB fading model in terms of its first-order statistics (PDF, CDF and MGF) and second-order statistics (level crossing rate and average fade duration), and then exemplify its applicability to wireless performance analysis. 

The remainder of this paper is structured as follows: the physical model of the FB fading distribution is described in Section \ref{ModelSection}. In Section \ref{StatSection} the PDF, CDF and MGF of this distribution are derived. Then, in Section \ref{SecondOrder} the level crossing rate (LCR) and average fade duration (AFD) are computed. These statistical results are then used to derive some performance metrics of interest in Section \ref{ApplicationsSection}. Finally, the main conclusions are outlined in Section \ref{ConclusionSection}.

%---------------------------------------------------------------------------------------
% PHYSICAL MODEL SECTION 
%---------------------------------------------------------------------------------------
\section{Physical model}
\label{ModelSection}
%{{signal is built out of a superposition}}
The physical model of the FB distribution arises as a generalization of the physical model of the $\kappa$-$\mu$ shadowed distribution \cite{Paris2014,Laureano2017}. The received radio {{signal is built out of a superposition}} of radio waves grouped into a number of clusters of waves, and the received signal power $W$ can be expressed in terms of the in-phase and quadrature components of the received signal affected by fading as follows 
\begin{equation}
\label{Model}
		W = \sum_{i=1}^{\mu}\left(X_i + p_i\xi\right)^2 + \left(Y_i + q_i\xi\right)^2
\end{equation}
where $\mu$ is a natural number indicating the number of clusters, $X_i$ and $Y_i$ are mutually independent Gaussian random processes with $E[X_i] = E[Y_i] = 0$, $E[X_i^2] = \sigma_x^2$, $E[Y_i^2] = \sigma_y^2$, $p_i$ and $q_i$ are real numbers and $\xi$ is\footnote{or equivalently, $\xi^2$ is a Gamma random variable with $E[\xi^2] = 1$, shape parameter $m$ and scale parameter $1/m$.} a Nakagami-$m$ distributed random variable with shape parameter $m$ and $E[\xi^2] = 1$ which accounts for the fluctuation of the {{LoS}} component.

As opposed to the $\kappa$-$\mu$ shadowed fading model, we here consider that $X_i$ and $Y_i$ can have different variances. Thus, the effect of power imbalance in the diffuse components associated to non-{{LoS}} propagation is considered. Similarly, we also assume that the power of the {{LoS}} components can be imbalanced, i.e. $p^2\triangleq\sum_{i=1}^{\mu}p^2_i \neq q^2 \triangleq \sum_{i=1}^{\mu} q^2_i$. Hence, the physical model in (\ref{Model}) can be regarded as a generalization of the Beckmann fading model through the consideration of clustering and {{LoS}} fluctuation.
%
% As argued in \cite{Yacoub2007}, the parameter $\mu$ can be considered as a real number, thus resulting in a more flexible and general distribution.

%
%---------------------------------------------------------------------------------------
% FUNDAMENTAL STATISTICS SECTION
%--------------------------------------------------------------------------------------
\section{First Order Statistics}
\label{StatSection}

\begin{figure*}[!t]
\setcounter{equation}{7}
% ensure that we have normalsize text
\normalsize
\begin{align}
\label{MGF_fin}
	M_{\gamma}(s)=&\frac{1}{\left(1-\frac{2\eta}{\mu(1+\eta)(1+\kappa)}\bar{\gamma s} \right)^{\mu/2} \left(1-\frac{2}{\mu(1+\eta)(1+\kappa)}\bar{\gamma s} \right)^{\mu/2}}  \notag \\
	&\times\left[1-\frac{1}{m}\left(\frac{\mu\kappa\left( \frac{\varrho^2}{1+\varrho^2}\right)(1+\eta)\bar{\gamma}s}{(1+\eta)(1+\kappa)\mu - 2\eta\bar{\gamma} s} \right.\right.\left.\left.+  \frac{\mu\kappa\left( \frac{1}{1+\varrho^2}\right)(1+\eta)\bar{\gamma}s}{(1+\eta)(1+\kappa)\mu - 2\bar{\gamma} s}\right) \right]^{-m}
\end{align}
\begin{align}\label{eqPDF}
		f_{\gamma}(\gamma) =\frac{\alpha_2^{m-\mu/2} \gamma^{\mu-1}}{\bar{\gamma}^{\mu}\Gamma(\mu)\alpha_1^m}
		\Phi_2^{(6)} \left(\frac{\mu}{2}, \frac{\mu}{2},-m, -m, m, m; \mu;\right.\left.\frac{-\gamma}{\bar{\gamma}\sqrt{\eta\alpha_2}}, 
			\frac{-\gamma\sqrt{\eta}}{\bar{\gamma}\sqrt{\alpha_2}},
			\frac{-\gamma\delta_1}{\bar{\gamma}}, \frac{-\gamma\delta_2}{\bar{\gamma}},\frac{-\gamma c_1}{\bar{\gamma}}, \frac{-\gamma c_2}{\bar{\gamma}}\right),
\end{align}

\begin{align}\label{eqCDF}
		F_{\gamma}(\gamma) = \frac{\alpha_2^{m-\mu/2} \gamma^{\mu}}{\bar{\gamma}^{\mu}\Gamma(\mu+1)\alpha_1^m}
		\Phi_2^{(6)} \left(\frac{\mu}{2}, \frac{\mu}{2},-m, -m, m, m; \mu+1;\right. \left.\frac{-\gamma}{\bar{\gamma}\sqrt{\eta\alpha_2}}, 
			\frac{-\gamma\sqrt{\eta}}{\bar{\gamma}\sqrt{\alpha_2}},
			\frac{-\gamma\delta_1}{\bar{\gamma}}, \frac{-\gamma\delta_2}{\bar{\gamma}},\frac{-\gamma c_1}{\bar{\gamma}}, \frac{-\gamma c_2}{\bar{\gamma}}\right).
\end{align}

\hrulefill
\hrulefill
% Restore the current equation number.
%\setcounter{equation}{\value{MYtempeqncnt}}
% The IEEE uses as a separator
%\hrulefill
% The spacer can be tweaked to stop underfull vboxes.
\vspace*{4pt}
\end{figure*}

\setcounter{equation}{1}

We will now provide a first-order characterization of the FB distribution in terms of its chief probability functions; {as we will later see, tractable analytical expressions are attainable for its MGF, PDF and CDF}. Hereinafter, we will consider the random variable $\gamma\overset{\Delta}{=}\nolinebreak\bar{\gamma}W/\bar{W}$, representing the instantaneous SNR at the receiver side. 

%\emph{Definition {1}: The Fluctuating Beckmann distribution}.\\
\begin{definition}{Let $\gamma$ be a random variable characterizing the instantaneous SNR for the physical model in (\ref{Model}). Then, $\gamma$ is said to follow a Fluctuating Beckmann (FB) distribution with mean ${\bar{\gamma}=\mathbb{E}[\gamma]}$ and non-negative real shape parameters $\kappa$, $\mu$, $m$, $\eta$ and $\varrho$, i.e. ${\gamma\sim\mathcal{FB}(\bar{\gamma}; \kappa, \mu, m,\eta,\varrho)}$, with}
\begin{align}\label{eqDef}
\kappa &= \frac{p^2+q^2}{\mu\left(\sigma_x^2+\sigma_y^2\right)}, & \varrho^2 &= \frac{p^2}{q^2}, &  \eta &= \frac{\sigma_x^2}{\sigma_y^2},
\end{align}
$\mu$ represents the number of clusters and $m$ accounts for the fluctuation of the {{LoS}} component.
\end{definition}
\vspace{3mm}
With the above definition, we now calculate the MGF of $\gamma$ in the following lemma.

\begin{lemma}
\label{lemma1}
Let ${\gamma\sim\mathcal{FB}(\bar{\gamma}; \kappa, \mu, m,\eta,\varrho)}$. Then, the MGF of $\gamma$ is given at the top of next page in \eqref{MGF_fin}.
%\begin{align}
%\label{MGF_fin}
%	M_{\gamma}(s)=&\frac{1}{\left(1-\frac{2\eta}{\mu(1+\eta)(1+\kappa)}\bar{\gamma s} \right)^{\mu/2} \left(1-\frac{2}{\mu(1+\eta)(1+\kappa)}\bar{\gamma s} \right)^{\mu/2}}  \notag \\
%	&\times\left[1-\frac{1}{m}\left(\frac{\mu\kappa\left( \frac{\varrho^2}{1+\varrho^2}\right)(1+\eta)\bar{\gamma}s}{(1+\eta)(1+\kappa)\mu - 2\eta\bar{\gamma} s} \right.\right.\left.\left.+  \frac{\mu\kappa\left( \frac{1}{1+\varrho^2}\right)(1+\eta)\bar{\gamma}s}{(1+\eta)(1+\kappa)\mu - 2\bar{\gamma} s}\right) \right]^{-m}
%\end{align}
\end{lemma}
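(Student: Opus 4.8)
The plan is to compute the MGF directly from the physical model in \eqref{Model}, exploiting the conditional Gaussian structure. First I would condition on the fluctuation variable $\xi$. Given $\xi$, each term $(X_i+p_i\xi)^2$ is the square of a Gaussian with mean $p_i\xi$ and variance $\sigma_x^2$, and likewise $(Y_i+q_i\xi)^2$ has mean $q_i\xi$ and variance $\sigma_y^2$. Hence, conditioned on $\xi$, $W$ is a sum of independent (generally non-central, non-identically-scaled) squared Gaussians, so its conditional MGF factorizes into a product of the standard non-central chi-square MGFs: $\prod_i (1-2\sigma_x^2 t)^{-1/2}\exp\!\big(\tfrac{p_i^2\xi^2 t}{1-2\sigma_x^2 t}\big)$ times the analogous $Y$-term. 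Using $p^2=\sum_i p_i^2$ and $q^2=\sum_i q_i^2$, the product collapses into $(1-2\sigma_x^2 t)^{-\mu/2}(1-2\sigma_y^2 t)^{-\mu/2}\exp\!\big(\xi^2 t\big[\tfrac{p^2}{1-2\sigma_x^2 t}+\tfrac{q^2}{1-2\sigma_y^2 t}\big]\big)$.

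Next I would remove the conditioning. Since $\xi^2$ is Gamma with shape $m$ and unit mean (scale $1/m$), averaging $\exp(a\xi^2)$ over $\xi^2$ yields $(1-a/m)^{-m}$, valid for $a<m$. Applying this with $a = t\big[\tfrac{p^2}{1-2\sigma_x^2 t}+\tfrac{q^2}{1-2\sigma_y^2 t}\big]$ gives the unconditional MGF of $W$ as a product of the two power factors times a bracketed term raised to $-m$. Then I would pass from $W$ to the normalized SNR $\gamma = \bar\gamma W/\bar W$, i.e. replace $t$ by $s\bar\gamma/\bar W$, after computing $\bar W = \mathbb{E}[W] = \mu(\sigma_x^2+\sigma_y^2) + (p^2+q^2)\mathbb{E}[\xi^2] = \mu(\sigma_x^2+\sigma_y^2)(1+\kappa)$ using the definition of $\kappa$ in \eqref{eqDef}.

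The remaining work is purely a change of variables: re-express everything in the shape parameters $\kappa,\mu,m,\eta,\varrho$. Using $\eta=\sigma_x^2/\sigma_y^2$ and $\kappa=(p^2+q^2)/[\mu(\sigma_x^2+\sigma_y^2)]$, one gets $\sigma_y^2/\bar W = 1/[\mu(1+\eta)(1+\kappa)]$ and $\sigma_x^2/\bar W = \eta/[\mu(1+\eta)(1+\kappa)]$, which produces the two denominator factors $(1-\tfrac{2\eta}{\mu(1+\eta)(1+\kappa)}\bar\gamma s)^{\mu/2}$ and $(1-\tfrac{2}{\mu(1+\eta)(1+\kappa)}\bar\gamma s)^{\mu/2}$ in \eqref{MGF_fin}. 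For the bracketed term, I would split $p^2+q^2=\mu\kappa(\sigma_x^2+\sigma_y^2)$ according to $\varrho^2=p^2/q^2$, writing $p^2 = \mu\kappa(\sigma_x^2+\sigma_y^2)\tfrac{\varrho^2}{1+\varrho^2}$ and $q^2 = \mu\kappa(\sigma_x^2+\sigma_y^2)\tfrac{1}{1+\varrho^2}$; substituting and simplifying the two fractions $\tfrac{p^2 t}{1-2\sigma_x^2 t}$ and $\tfrac{q^2 t}{1-2\sigma_y^2 t}$ yields exactly the two summands inside the bracket of \eqref{MGF_fin}. The main obstacle is bookkeeping rather than conceptual: one must carefully track which variance ($\sigma_x^2$ versus $\sigma_y^2$) pairs with which line-of-sight power ($p^2$ versus $q^2$) so that the $\eta$ and $1$ (equivalently, the $\varrho^2/(1+\varrho^2)$ and $1/(1+\varrho^2)$) factors land in the correct terms, and to verify the algebraic identities $\tfrac{\sigma_x^2 t}{1-2\sigma_x^2 t}\cdot(\text{normalization}) = \tfrac{\eta\bar\gamma s}{(1+\eta)(1+\kappa)\mu - 2\eta\bar\gamma s}$ and its counterpart; once these are in hand, \eqref{MGF_fin} follows immediately.
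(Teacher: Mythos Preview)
Your proposal is correct and follows essentially the same route as the paper: condition on $\xi$, use the MGF of a sum of independent non-central squared Gaussians (equivalently, the Beckmann MGF for each cluster) to obtain the conditional MGF, then average over the Nakagami-$m$ (i.e., Gamma-distributed $\xi^2$) fluctuation and re-express in terms of $\kappa,\mu,m,\eta,\varrho$. The only cosmetic difference is ordering---the paper re-parameterizes before unconditioning whereas you uncondition first---and you are slightly more explicit about computing $\bar W$ and the normalization, but the argument is the same.
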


\begin{proof}
See Appendix \ref{app1}.
\end{proof}

Lemma \ref{lemma1} provides a simple closed-form expression for the MGF of the FB fading distribution. From (\ref{MGF_fin}), we will now show that the PDF
and CDF of the FB fading distribution { have a similar functional form as the $\kappa$-$\mu$ shadowed fading distribution \cite{Paris2014}}.

\begin{lemma}
\label{lemma2}
Let ${\gamma\sim\mathcal{FB}(\bar{\gamma}; \kappa, \mu, m,\eta,\varrho)}$. Then, the PDF of $\gamma$ is given by \eqref{eqPDF} at the top of next page, 
%\begin{align}\label{eqPDF}
%		f_{\gamma}(\gamma) =\frac{\alpha_2^{m-\mu/2} \gamma^{\mu-1}}{\bar{\gamma}^{\mu}\Gamma(\mu)\alpha_1^m}
%		\Phi_2^{(6)} \left(\frac{\mu}{2}, \frac{\mu}{2},-m, -m, m, m; \mu;\right.\left.\frac{-\gamma}{\bar{\gamma}\sqrt{\eta\alpha_2}}, 
%			\frac{-\gamma\sqrt{\eta}}{\bar{\gamma}\sqrt{\alpha_2}},
%			\frac{-\gamma\delta_1}{\bar{\gamma}}, \frac{-\gamma\delta_2}{\bar{\gamma}},\frac{-\gamma c_1}{\bar{\gamma}}, \frac{-\gamma c_2}{\bar{\gamma}}\right),
%\end{align}
where {$\delta_{x}$, $c_{x}$ and $\alpha_{x}$, with $x=\{1,2\}$} depend on the parameters of the FB distribution as described in the sequel, and $\Phi_2^{(n)}$ is the confluent form of the generalized Lauricella series defined in \cite[eq. 7.2, pp. 446]{erdelyi1937}.
%\begin{equation}
%\Phi_2^{(n)}\left(b_1,\ldots ,b_n; c; x_1,\ldots , x_n\right)\triangleq \sum_{m_1\ldots m_n}^{\infty} \frac{(b_1)_{m_1}\ldots (b_n)_{m_n}}{(c)_{m_1+\ldots+m_n}}\frac{x_1^{m_1}}{m_1!}\ldots\frac{x_n^{m_n}}{m_n!}
%\end{equation}
\end{lemma}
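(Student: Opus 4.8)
The plan is to recover the PDF by Laplace-inverting the MGF of Lemma~\ref{lemma1}. Since $M_\gamma$ is the moment generating function of the non-negative SNR $\gamma$, one has $f_\gamma(\gamma)=\mathcal{L}^{-1}\{M_\gamma(-s)\}(\gamma)$, so the whole argument reduces to rewriting $M_\gamma(-s)$ as a product of elementary factors that admit a term-by-term inversion.

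The first step is to simplify the factor raised to the power $-m$ in \eqref{MGF_fin}. Placing its two inner fractions over the common denominator $[(1+\eta)(1+\kappa)\mu-2\eta\bar{\gamma}s]\,[(1+\eta)(1+\kappa)\mu-2\bar{\gamma}s]$ and simplifying, the bracket collapses to $\mathcal{Q}(s)/[(1-As)(1-Bs)]$ with $A=\tfrac{2\eta\bar{\gamma}}{\mu(1+\eta)(1+\kappa)}$, $B=\tfrac{2\bar{\gamma}}{\mu(1+\eta)(1+\kappa)}$, and $\mathcal{Q}$ a quadratic polynomial normalised so that $\mathcal{Q}(0)=1$. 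Hence $M_\gamma(s)=(1-As)^{m-\mu/2}(1-Bs)^{m-\mu/2}\,\mathcal{Q}(s)^{-m}$, and substituting $s\mapsto -s$, factoring $\mathcal{Q}(-s)=(1+a_5 s)(1+a_6 s)$, and splitting each $(1+as)^{m-\mu/2}=(1+as)^{-\mu/2}(1+as)^{m}$ gives
\[
M_\gamma(-s)=\prod_{j=1}^{6}(1+a_j s)^{-b_j},\qquad (b_1,\dots,b_6)=\bigl(\tfrac{\mu}{2},\tfrac{\mu}{2},-m,-m,m,m\bigr),
\]
with $a_1=a_3=A$, $a_2=a_4=B$, and $a_5,a_6$ the reciprocal roots of $\mathcal{Q}$; in particular $\sum_j b_j=\mu$. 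The constants of the statement are then read off: $\alpha_2=AB/\bar\gamma^2$, $\delta_1=\bar\gamma/A$, $\delta_2=\bar\gamma/B$ (so that, since $A=\eta B$, the first and third, and the second and fourth, arguments of $\Phi_2^{(6)}$ coincide, written in two equivalent forms), $c_1=\bar\gamma/a_5$, $c_2=\bar\gamma/a_6$, and $\alpha_1=a_5 a_6/\bar\gamma^2$. One should also note that $\mathcal{Q}$ may have a complex-conjugate pair of roots, in which case $c_1=\bar c_2$ enter two slots of $\Phi_2^{(6)}$ with conjugate arguments and the series still sums to a real quantity.

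The inversion is then a direct application of the identity
\[
\mathcal{L}^{-1}\!\Bigl\{\textstyle\prod_{j=1}^{n}(1+a_j s)^{-b_j}\Bigr\}(\gamma)=\frac{\gamma^{\rho-1}}{\Gamma(\rho)}\Bigl(\textstyle\prod_{j=1}^{n}a_j^{-b_j}\Bigr)\,\Phi_2^{(n)}\!\Bigl(b_1,\dots,b_n;\rho;-\tfrac{\gamma}{a_1},\dots,-\tfrac{\gamma}{a_n}\Bigr),\quad \rho=\textstyle\sum_j b_j,
\]
which follows by extracting the leading $s^{-\rho}$ behaviour of the product, expanding the remainder as a multiple power series in $s^{-1}$, inverting each monomial through $\mathcal{L}^{-1}\{s^{-\rho-k}\}=\gamma^{\rho+k-1}/\Gamma(\rho+k)$, and recognising the resulting $6$-fold series as the confluent Lauricella function $\Phi_2^{(6)}$ of \cite[eq.~7.2]{erdelyi1937}. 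Specialising to $n=6$, $\rho=\mu$, and using $\prod_j a_j^{-b_j}=\bar\gamma^{-\mu}\alpha_2^{m-\mu/2}\alpha_1^{-m}$ reproduces \eqref{eqPDF}; the CDF \eqref{eqCDF} then follows by an extra integration, which merely shifts the third parameter of $\Phi_2^{(6)}$ from $\mu$ to $\mu+1$ and divides the prefactor by $\mu$.

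The part I expect to be the main obstacle is the algebra of the second step — turning the $(-m)$-th power in \eqref{MGF_fin} into the two linear factors $(1+a_5 s)(1+a_6 s)$ and checking that the roots of $\mathcal{Q}$, together with the leading coefficients collected into the prefactor, are exactly the quantities $c_1,c_2,\alpha_1,\alpha_2,\delta_1,\delta_2$ asserted; the term-by-term Laplace inversion is a secondary technicality, justified by the convergence of the $\Phi_2^{(6)}$ series.
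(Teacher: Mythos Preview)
Your approach coincides with the paper's: factor $M_\gamma(-s)$ as a six-fold product $\prod_j(1+a_js)^{-b_j}$ with $\sum_j b_j=\mu$ and invert via the Laplace--$\Phi_2^{(n)}$ correspondence, which the paper invokes as \cite[eq.~9.55]{Srivastava1985}. The only discrepancy is a bookkeeping swap: in the paper $\delta_{1,2}$ are defined as the roots of $\alpha_1 s^2+\beta_1 s+1$ (i.e.\ of your quadratic $\mathcal{Q}$, written in the variable $t=\bar\gamma s$) while $c_{1,2}$ are the roots of $\alpha_2 s^2+\beta_2 s+1$ (so that $c_{1,2}=\bar\gamma/A,\,\bar\gamma/B$), the reverse of your assignment---hence in \eqref{eqPDF} it is the first/fifth and second/sixth $\Phi_2^{(6)}$ arguments that coincide, not the first/third and second/fourth as you state.
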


\begin{proof}
Manipulating \eqref{MGF_fin} it is possible to write the MGF expression as follows
\begin{align}
	&M_{\gamma}(s) =
	\left( 1- \tfrac{\tfrac{\mu(1+\eta)(1+\kappa)}{2\eta\bar{\gamma}}}{s} \right)^{-\tfrac{\mu}{2}}\left( 1- \tfrac{\tfrac{\mu(1+\eta)(1+\kappa)}{2\bar{\gamma}}}{s} \right)^{-{\tfrac{\mu}{2}}}\times\notag \\
	& \tfrac{-1^{\mu}}{s^{\mu}}\tfrac{\alpha_2^{m-\mu/2}}{\bar{\gamma}^{\mu}\alpha_1^m}
	\left(1 - \tfrac{\delta_1}{\bar{\gamma}s}  \right)^m\left(1 - \tfrac{\delta_2}{\bar{\gamma}s}\right)^m 
	\left(1 - \tfrac{c_1}{\bar{\gamma}s}\right)^{-m}\left(1 - \tfrac{c_2}{\bar{\gamma}s}\right)^{-m}, \label{MGF_fac}
\end{align}
where $\delta_{1,2}$ are the roots of $\alpha_1s^2+\beta_1s + 1$ and $c_{1,2}$ the roots of $\alpha_2s^2+\beta_2s + 1$ with
\begin{align}
	\alpha_1 &= \frac{4\eta}{\mu^2(1+\eta)^2(1+\kappa)^2} + 
		\frac{2\kappa(\varrho^2+\eta)}{m (1+\varrho^2)\mu(1+\eta)(1+\kappa)^2}, \\
	\beta_1 &= \frac{-1}{1+\kappa}\left[ \frac{2}{\mu} + \frac{\kappa}{m} \right],  \\
	\alpha_2 &= \frac{4\eta}{\mu^2(1+\eta)^2(1+\kappa)^2}, \\
	\beta_2 &= \frac{-2}{\mu(1+\kappa)}.
\end{align}

The expression for the PDF can be derived from \eqref{MGF_fac} as $f_{\gamma}(\gamma) = \nolinebreak \mathcal{L}^{-1} \{ M_{\gamma}(-s) \}$ using \cite[eq. 9.55]{Srivastava1985}, yielding (\ref{eqPDF}).%
\end{proof}
\vspace{3mm}

\begin{lemma}
\label{lemma3}
Let ${\gamma\sim\mathcal{FB}(\bar{\gamma}; \kappa, \mu, m,\eta,\varrho)}$. Then, the CDF of $\gamma$ is given by \eqref{eqCDF} at the top of this page, 
%\begin{align}\label{eqCDF}
%		F_{\gamma}(\gamma) = \frac{\alpha_2^{m-\mu/2} \gamma^{\mu}}{\bar{\gamma}^{\mu}\Gamma(\mu+1)\alpha_1^m}
%		\Phi_2^{(6)} \left(\frac{\mu}{2}, \frac{\mu}{2},-m, -m, m, m; \mu+1;\right. \left.\frac{-\gamma}{\bar{\gamma}\sqrt{\eta\alpha_2}}, 
%			\frac{-\gamma\sqrt{\eta}}{\bar{\gamma}\sqrt{\alpha_2}},
%			\frac{-\gamma\delta_1}{\bar{\gamma}}, \frac{-\gamma\delta_2}{\bar{\gamma}},\frac{-\gamma c_1}{\bar{\gamma}}, \frac{-\gamma c_2}{\bar{\gamma}}\right).
%\end{align}
\end{lemma}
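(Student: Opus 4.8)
The plan is to obtain the CDF directly from the MGF, paralleling the derivation of the PDF in Lemma~\ref{lemma2}. The key fact is that integration over $[0,\gamma]$ in the time domain corresponds to division by $s$ in the Laplace domain, so that $F_{\gamma}(\gamma)=\int_{0}^{\gamma}f_{\gamma}(t)\,dt=\mathcal{L}^{-1}\{M_{\gamma}(-s)/s\}$.

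I would start from the factored MGF expression \eqref{MGF_fac} established in the proof of Lemma~\ref{lemma2} and multiply it by the extra factor $1/s$. The only $s$-dependent monomial prefactor there is $(-1)^{\mu}/s^{\mu}$, so after the substitution $s\to -s$ and the division by $s$ this term becomes $1/s^{\mu+1}$, while the two ``$\eta$'' binomials $(1-\cdots/s)^{-\mu/2}$ and the four binomials $(1-\delta_i/(\bar\gamma s))^{m}$, $(1-c_i/(\bar\gamma s))^{-m}$ retain their form. Hence $M_{\gamma}(-s)/s$ still lies in the class of functions covered by the inverse-Laplace identity \cite[eq.~9.55]{Srivastava1985}, with the sole modification that the exponent of $1/s$ has increased from $\mu$ to $\mu+1$. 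Applying that identity then yields exactly \eqref{eqCDF}: the power of $\gamma$ rises from $\mu-1$ to $\mu$, the factor $\Gamma(\mu)$ in the denominator is replaced by $\Gamma(\mu+1)$, and the parameter after the semicolon in $\Phi_2^{(6)}$ becomes $\mu+1$, with the six numerator parameters $(\mu/2,\mu/2,-m,-m,m,m)$, the six arguments, and the coefficients $\alpha_1,\alpha_2,\delta_{1,2},c_{1,2}$ all inherited unchanged from Lemma~\ref{lemma2}.

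Equivalently, one may argue directly on the series: expanding $\Phi_2^{(6)}$ in \eqref{eqPDF} as a sixfold power series, integrating term by term via $\int_{0}^{\gamma}t^{\mu-1+k}\,dt=\gamma^{\mu+k}/(\mu+k)$, and using $1/(\mu+k)=(\mu)_k/[\mu\,(\mu+1)_k]$ turns the Pochhammer symbol $(\mu)_k$ in the denominator of $\Phi_2^{(6)}$ into $(\mu+1)_k$ and absorbs the residual $1/\mu$ into $1/\Gamma(\mu+1)$, reproducing \eqref{eqCDF}. The only step needing care is the interchange of summation and integration (equivalently, that $M_{\gamma}(-s)/s$ stays within the region of validity of the cited formula), which is justified by the absolute and uniform convergence of the confluent Lauricella series on compacta of its domain; thus this is routine rather than a genuine obstacle. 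A useful consistency check is $F_{\gamma}(\infty)=1$, which follows from $M_{\gamma}(0)=1$.
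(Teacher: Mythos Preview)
Your proposal is correct and follows essentially the same approach as the paper: the paper's proof simply states that $F_{\gamma}(\gamma)=\mathcal{L}^{-1}\{M_{\gamma}(-s)/s\}$ and invokes \cite[eq.~9.55]{Srivastava1985} on the factored MGF from Lemma~\ref{lemma2}. Your write-up is in fact more detailed than the paper's one-line argument, spelling out precisely how the extra $1/s$ shifts the exponent from $\mu$ to $\mu+1$ and propagates to $\Gamma(\mu+1)$ and the $\Phi_2^{(6)}$ parameter, and even offering the equivalent termwise-integration justification.
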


\begin{proof}
Following the same steps as in the previous proof,  the CDF expression is given by $F_{\gamma}(\gamma) = \nolinebreak \mathcal{L}^{-1} \{ \frac{M_{\gamma}(-s)}{s} \}$, yielding (\ref{eqPDF}) directly from \cite[eq. 9.55]{Srivastava1985}.
\end{proof}
\setcounter{equation}{10}

The PDF and CDF of the FB distribution are given in terms of the multivariate $\Phi_2$ function, which also {appears} in other fading distributions in the literature \cite{Morales2010,Paris2014,Romero2016}. Apparently, and because it is defined as an $n$-fold infinite summation, its numerical evaluation may pose some challenges from a computational point of view. However, the Laplace transform of the $\Phi_2$ function has a comparatively simpler form in terms of a finite {product} of elementary functions, which becomes evident by inspecting the expression of the MGF in (\ref{MGF_fin}). Therefore, the $\Phi_2$ function can be evaluated by means of an inverse Laplace transform \cite{Martos2016,Abate1995}.

Note that the CDF and PDF of the received signal envelope can be directly derived from (\ref{eqPDF}) and (\ref{eqCDF}) straightforwardly through a change of variables. Thus, we get $f_R(R)=2R f_{\gamma}(R^2)$ and $F_R(R)=F_{\gamma}(R^2)$, with $\bar\gamma$ being replaced by $\Omega=E\{R^2\}$. 

As argued before, the FB distribution provides the unification of a large number of important fading distributions. These connections are summarized in Table \ref{Table1}, on which the parameters corresponding to the FB distribution are underlined in order to avoid confusion with the parameters of any of the distributions included as special cases. Notably, {the Beckmann distribution arises as a special case} of the more general FB distribution for $\mu=1$ and sufficiently large $m$. Thus, the additional degrees of freedom of the FB distribution also facilitates the anaytical characterization of the Beckmann distribution. Interestingly, when $\eta=1$ the effect of the parameter $\varrho$ vanishes; conversely, when setting $\varrho=1$ the effect of $\eta$ is still relevant. This is in coherence with the behavior of the Beckmann distribution as observed in \cite{Pena2017}.

\begin{table}[t]
% increase table row spacing, adjust to taste
\renewcommand{\arraystretch}{1.7}
\caption{Connections between the Fluctuating Beckmann fading model and other models in the literature. Note that setting $\kappa$=0 implies that $m$ and $\varrho$ vanish.}
\label{Table1}
\centering
\begin{tabular}
{c|c}
\hline
\hline
Channels  & Fluctuating Beckmann Fading Parameters\\
\hline
\hline
%\hline
%\multirow{2}{*}{Nakagami-$m$} &  a) \b{$\mu$}~$=m$, \b{$\kappa$}~$\rightarrow 0$ \\ & b) \b{$\mu$}~$ =m$, \b{$m$}~$=m$\\
\multirow{1}{*}{One-sided Gaussian} &  $\underline{\kappa}=0$,\hspace{3mm}$\underline{\mu}=1$,\hspace{3mm}$\underline{\eta}=0$\hspace{3mm}\\
%One-sided Gaussian &  $\underline{\varrho}=1$,\hspace{3mm}$\underline{q}=0$,\hspace{3mm}$\underline{K}=0$\hspace{3mm} \\
\hline
\multirow{1}{*}{Rayleigh} &  $\underline{\kappa}=0$,\hspace{3mm}$\underline{\mu}=1$,\hspace{3mm}$\underline{\eta}=1$\hspace{3mm}\\
\hline
\multirow{1}{*}{Nakagami-$m$} & $\underline{\kappa}=0$,\hspace{3mm}$\underline{\mu}=m$,\hspace{3mm}$\underline{\eta}=1$\hspace{3mm} \\
\hline
\multirow{1}{*}{Hoyt}  & $\underline{\kappa}=0$,\hspace{3mm}$\underline{\mu}=1$,\hspace{3mm}$\underline{\eta}=q$\hspace{3mm} \\
\hline
\multirow{1}{*}{$\eta$-$\mu$}& $\underline{\kappa}=0$,\hspace{3mm}$\underline{\mu}=\mu$,\hspace{3mm}$\underline{\eta}=\eta$\hspace{3mm} \\
\hline
Rice &   $\underline{\kappa}=K$,\hspace{3mm}$\underline{\mu}=1$,\hspace{3mm}$\underline{m}\rightarrow\infty$,\hspace{3mm}$\underline{\eta}=1$\hspace{3mm}$,\forall{\varrho}$\hspace{3mm}  \\
%\hline
%{{LoS}} with diffuse Hoyt &   $\underline{\kappa}=0$,\hspace{3mm}$\underline{\mu}=0$,\hspace{3mm}$\underline{m}=0$\hspace{3mm}$\underline{\eta}=0$\hspace{3mm}$\underline{\varrho}=0$\hspace{3mm}  \\
\hline
Symmetrical $\eta$-$\kappa$ &  $\underline{\kappa}=\kappa$,\hspace{3mm}$\underline{\mu}=1$,\hspace{3mm}$\underline{m}\rightarrow\infty$,\hspace{3mm}$\underline{\eta}=\eta$,\hspace{3mm}$\underline{\varrho}=\eta$\hspace{3mm}  \\
\hline
Asymmetrical $\eta$-$\kappa$  &  $\underline{\kappa}=\kappa$,\hspace{3mm}$\underline{\mu}=1$,\hspace{3mm}$\underline{m}\rightarrow\infty$,\hspace{3mm}$\underline{\eta}=\eta$,\hspace{3mm}$\underline{\varrho}=0$\hspace{3mm}  \\
\hline
Beckmann &  $\underline{\kappa}=K$,\hspace{3mm}$\underline{\mu}=1$,\hspace{3mm}$\underline{m}\rightarrow\infty$,\hspace{3mm}$\underline{\eta}=q$,\hspace{3mm}$\underline{\varrho}=r$\hspace{3mm}  \\
\hline
$\kappa$-$\mu$ &  $\underline{\kappa}=\kappa$,\hspace{3mm}$\underline{\mu}=\mu$,\hspace{3mm}$\underline{m}\rightarrow\infty$,\hspace{3mm}$\underline{\eta}=1$,\hspace{3mm}$\forall{\varrho}$\hspace{3mm}  \\
\hline
Rician Shadowed &  $\underline{\kappa}=\kappa$,\hspace{3mm}$\underline{\mu}=1$,\hspace{3mm}$\underline{m}=m$,\hspace{3mm}$\underline{\eta}=1$,\hspace{3mm}$\forall{\varrho}$\hspace{3mm}  \\
\hline
$\kappa$-$\mu$ shadowed &  $\underline{\kappa}=\kappa$,\hspace{3mm}$\underline{\mu}=\mu$,\hspace{3mm}$\underline{m}=m$,\hspace{3mm}$\underline{\eta}=1$,\hspace{3mm}$\forall{\varrho}$\hspace{3mm}  \\
\hline
\hline
\end{tabular}
\end{table}
\section{Second Order Statistics}
\label{SecondOrder}

First-order statistics such as the PDF, CDF or MGF provide valuable information about the statistical behavior of the amplitude (or equivalently power) of the received signal affected by fading. However, they do not incorporate information related to the dynamic behavior of the fading process, which is of paramount relevance in the context of wireless communications because of the relative motion of transmitter, receivers and scatterers due to mobility. In the literature, two metrics are used to capture the dynamics of a general random process: the level crossing rate (LCR), which measures how often the amplitude of the received signal crosses a given threshold value, and the average fade duration (AFD), which measures how long the amplitude of the received signal remains below this threshold \cite{Rice1944}.

The LCR of the received signal amplitude $R$ can be computed using Rice's formula \cite{Rice1944} as

\begin{align}
\label{eqRice}
N_R(u)=\int_0^{\infty}\dot{r}f_{R,\dot{R}}\left(u,\dot{r}\right)d\dot{r},
\end{align}
where $\dot{R}$ denotes the time derivative of the signal envelope and $f_{R,\dot{R}}(r,\dot{r})$ is the joint PDF of the received signal amplitude and its time derivative. Thus, in order to characterize the LCR of $R$, we must calculate the joint distribution of $R$ and $\dot{R}$. In our derivations, we will assume that the fluctuations in the diffuse part (i.e., {{NLoS}}) occur at a smaller scale compared to those of the {{LoS}} component in the fluctuating Beckmann fading model. This is the case, for instance, on which such {{LoS}} fluctuation can be associated to shadowing.

Let us express the squared signal envelope as
\begin{equation} R^2=R_1^2+R_2^2.\end{equation}
where $R_1$ and $R_2$ are defined as 
\begin{equation}R_1^2=\sum_{k=1}^\mu(X_k+\xi p_k)^2,\;\; R_2^2=\sum_{k=1}^\mu (Y_k+\xi q_k)^2,\end{equation}
Note that both variables, when conditioned to $\xi$, are independent. After normalizing by \mbox{$\Omega = {E[R^2]}$}, we have that $R_1$ and $R_2$ are distributed as a $\kappa$-$\mu$ random variables, with probability density function given by
\begin{equation}f_{R_k}(r_k)=\tfrac{\Omega^{\mu/4+1/2}}{\sigma^2_k(\xi d_k)^{\mu/2-1}}r_k^{\mu/2}e^{-\tfrac{\Omega r_k^2}{2\sigma^2_k}-\tfrac{\xi^2d_k^2}{2\sigma^2_k}}\text{I}_{\mu/2-1}\left(\tfrac{\Omega^{1/2}r_k\xi d_k}{\sigma_k^2}\right) \label{PDFR1}, \end{equation}
where $d_1^2=p^2=\sum_{k=1}^{\mu}p_k^2$, $d_2^2=q^2=\sum_{k=1}^{\mu}q_k^2$,  
and $\xi$ is a Nakagami-m distributed random variable with PDF given by
\begin{equation}
f_{\Xi}(\xi)=\frac{2m^m}{\Gamma(m)}\xi^{2m-1}\exp(-m\xi^2)
\end{equation} 

The derivative of $R$ with respect to time, $\dot{R}$, can be expressed as
\begin{equation}\dot{R}=\frac{\dot{R_1}R_1+\dot{R_2}R_2}{R}.\end{equation}
Conditioned to $R_1$, $R_2$ and $R$, the derivative of $R$ is a zero-mean Gaussian variable with variance 
\begin{equation}
\label{eq0000}\sigma^2_{\dot{R}}=\frac{\sigma^2_{\dot{R_1}}R^2_1+\sigma^2_{\dot{R_2}}R^2_2}{R^2}=\frac{\sigma^2_{\dot{R_1}}R^2_1+\sigma^2_{\dot{R_2}}R^2_2}{R^2_1+R^2_2}.\end{equation}
Hence, the distribution of $\dot{R}$ conditioned to $R_1$ and $R_2$ is
\begin{equation}f_{\dot{R}|R_1,R_2}= \frac{1}{\sqrt{2\pi\left(\frac{\sigma^2_{\dot{r_1}}r^2_1+\sigma^2_{\dot{r_2}}r^2_2}{r^2_1+r^2_2}\right)}}\cdot e^{-\frac{(r_1^2+r_2^2)\dot{r}^2}{2(\sigma^2_{\dot{r_1}}r^2_1+\sigma^2_{\dot{r_2}}r^2_2)}}, \label{PDFGauss}\end{equation}

The LCR can be obtained as
\begin{equation}\begin{split}
N_R(u)=&\int_0^\infty \dot{r} f_{R,\dot{R}}(u,\dot{r})d\dot{r}\notag\\=&\int_0^\infty \dot{r} \left( \int_0^u f_{\dot{R}| R,R1}(\dot{r},u,r_1)f_{R,R1}(u,r_1)dr_1\right)d\dot{r}\\
=& \int_0^u  f_{R,R_1}(u,r_1) \left(\int_0^\infty \dot{r} f_{\dot{R}|R,R_1}(\dot{r},u,r_1)d\dot{r}\right)dr_1.
\end{split}\end{equation}

Using the PDF of $\dot{R}$ conditioned to $R$ and $R_1$ in equation (\ref{PDFGauss}),
\begin{equation}\int_0^\infty \dot{r} f_{\dot{R}|R,R_1}(\dot{r},u,r_1)d\dot{r}=\sqrt{\frac{\sigma_{\dot{R}}^2}{2\pi}}.\end{equation}
The joint distribution of $R_1$ and $R_2$ can be obtained as
\begin{equation}
\begin{split}
&f_{R_1,R_2}(r_1,r_2)=\int_0^\infty f_{R_1|\xi}(r_1,\xi)f_{R_2|\xi}(r_2,\xi)f_\Xi(\xi)d\xi\\
=& \frac{2m^m\Omega^{\mu/2+1}}{\Gamma(m)\sigma^2_1\sigma_2^2 (p q)^{\mu/2-1}}r_1^{\mu/2}r_2^{\mu/2}\exp\left(-r_1^2\frac{\Omega}{2\sigma_1^2} - r_2^2\frac{\Omega}{2\sigma_2^2}\right) \times \\
&\int_0^\infty \left\{\xi^{2m-\mu+1}\exp\left(-\xi^2\left(\frac{p^2}{2\sigma_1^2}+\frac{q^2}{2\sigma_2^2}+m\right)\right)\right.\times\\ &\left.\text{I}_{\mu/2-1}\left(\frac{\Omega^{1/2}r_1\xi p}{\sigma_1^2}\right)\text{I}_{\mu/2-1}\left(\frac{\Omega^{1/2}r_2\xi q}{\sigma_2^2}\right) d\xi\right\}
\end{split}
\end{equation}
In the general case, the last integral cannot be solved analytically in closed-form, {to the best of the authors’ knowledge}. However, considering the special case of $q_k=0$ (the case of $p_k=0$ can be solved similarly) in which $R_1$ and $R_2$ are independent, the distribution of $f_{R,R_1}(r,r_1)$ can be obtained as
\begin{equation}\begin{split}
f_{R,R_1}(u,r_1)= &|J_{r_1,r_2}(u,r_1)|f_{R_1,R_2}(r_1,\sqrt{u^2-r_1^2})\\=&|J_{r_1,r_2}(u,r_1)|f_{R_1}(r_1)f_{R_2}(\sqrt{u^2-r_1^2})\\
=&\frac{m^m\Omega^{\mu}u\cdot r^{\mu-1}_1\cdot (u^2-r^2_1)^{\mu/2-1}}{2^{\mu-2}\Gamma^2(\mu/2)\sigma_1^\mu\sigma_2^\mu(\frac{p^2}{2\sigma^2_1}+m)^m} \times \\
&e^{-\frac{\Omega(u^2-r^2_1)}{2\sigma^2_2}-\frac{\Omega r^2_1}{2\sigma^2_1}} \cdot {}_1F_1\left(m,\mu/2;\frac{\Omega\frac{p^2}{4\sigma^4_1}}{\frac{p^2}{2\sigma^2_1}+m}r_1^2\right), \\
\end{split} \end{equation}
for $0 \leq r_1 \leq r$, where $|J_{r_1,r_2}(\cdot,\cdot)|$ denotes the Jacobian of the transformation of random variables. In this particular case, the LCR of the normalized envelope $R$ can be expressed as
\begin{equation}
\begin{split}
\label{LCRvar}
&N_R(u)=  \frac{m^m\Omega^{\mu}u}{2^{\mu-2}\Gamma^2(\mu/2)\sigma_1^\mu\sigma_2^\mu(\frac{p^2}{2\sigma^2_1}+m)^m \sqrt{2\pi}} \times \\
&\int_0^u \left(\sigma^2_{\dot{R_1}}r^2_1/u^2+\sigma^2_{\dot{R_2}}(1-r^2_1/u^2)\right)^{\frac{1}{2}}  (u^2-r^2_1)^{\mu/2-1}  \times\\&r^{\mu-1}_1 e^{-\frac{\Omega(u^2-r^2_1)}{2\sigma^2_2}-\frac{\Omega r^2_1}{2\sigma^2_1}} {}_1F_1\left(m,\tfrac{\mu}{2};\frac{\Omega\frac{p^2}{4\sigma^4_1}}{\tfrac{p^2}{2\sigma^2_1}+m}r_1^2\right)   dr_1 
\end{split}
\end{equation}
and after a change of variables we have
\begin{equation}
\begin{split}
&N_R(u)=\frac{m^m\Omega^{\mu}u^{(2\mu-1)}}{2^{\mu-1}\Gamma^2(\mu/2)\sigma_1^\mu\sigma_2^\mu(\frac{p^2}{2\sigma^2_1}+m)^m \sqrt{2\pi}} \cdot e^{-\frac{\Omega u^2}{2\sigma^2_2}}  \times \\
&\int_0^1 [\sigma^2_{\dot{R_2}}+(\sigma^2_{\dot{R_1}}-\sigma^2_{\dot{R_2}})x]^{\frac{1}{2}} (1-x)^{(\mu/2-1)} x^{(\mu/2-1)}\times\\&e^{-\frac{\Omega(\sigma^2_2-\sigma^2_1)u^2 x}{2\sigma^2_1\sigma^2_2}}{}_1F_1\left(m,\tfrac{\mu}{2};\tfrac{\Omega\frac{p^2}{4\sigma^4_1}}{\frac{p^2}{2\sigma^2_1}+m}u^2 x\right)dx
\end{split}
\end{equation}
where $\sigma^2_{\dot{R_1}}=\frac{-\ddot{\rho}(0)\sigma_1^2}{\Omega}$ and $\sigma^2_{\dot{R_2}}=\frac{-\ddot{\rho}(0)\sigma_2^2}{\Omega}$ from \eqref{eq0000}, and $\ddot{\rho}(0)$ is the second derivative of the autocorrelation function evaluated at $0$. 

Finally, \eqref{LCRvar} can be expressed in terms of the parameters of the FB distribution\footnote{Note that, because of the assumption of $q_k=0$, this implies that the parameter $\varrho\rightarrow\infty$. }, yielding
\begin{equation}
\label{eqLCRfinal}
\begin{split} & N_R(u)= \tfrac{m^m[\mu(1+\eta)(1+\kappa)]^{\mu-1/2}\sqrt{-\ddot{\rho}(0)}}{2^{\mu-1}\Gamma^2(\mu/2)\eta^{\mu/2}(\frac{\mu \kappa (1+\eta)}{2\eta}+m)^m \sqrt{2\pi}} \cdot u^{(2\mu-1)}  \times \\
&e^{-\mu/2 (1+\eta)(1+\kappa)u^2}\int_0^1 [1+(\eta-1)x]^{\frac{1}{2}} (1-x)^{(\mu/2-1)} x^{(\mu/2-1)}\times\\&e^{-\frac{\mu(1-\eta^2)(1+\kappa)}{2\eta}u^2 x}{}_1F_1\left(m,\mu/2;\frac{\frac{\kappa\mu^2(1+\eta)^2(1+\kappa)}{4\eta^2}}{\frac{\mu\kappa(1+\eta)}{2\eta}+m}u^2 x\right)dx
\end{split}\end{equation} 

{With the knowledge of the LCR and the CDF of the FB distribution}, the AFD can be directly obtained as

\begin{equation}
\label{eqAFDfinal}
T_R(u)=\frac{F_R(u)}{N_R(u)},
\end{equation}
where $F_R(u)$ is the CDF of the fading amplitude envelope derived in (\ref{eqCDF}), after a proper change of variables.

\section{Numerical Results and Applications}
\label{ApplicationsSection}

\subsection{First Order Statistics}
After attaining a full statistical characterization of the newly proposed Fluctuating Beckmann fading distribution, we aim to exemplify the influence of the parameters of this fading model over the distribution of the received amplitude. We will first focus on understanding the effect of the power imbalance in the {{LoS}} and {{NLoS}} components (i.e. the effect of $\varrho$ and $\eta$), since these are the two parameters that effectively extend the original $\kappa$-$\mu$ shadowed fading model to a more general case. Monte Carlo simulations are provided in order to double-check the validity of the derived expressions.

In Figs. \ref{PDF1} and \ref{PDF2}, the PDF of the received signal amplitude is represented for different values of {{NLoS}} power imbalance $\eta$ and {{LoS}} fluctuation severity $m$. The values $m=1$ and $m=10$ correspond to the cases of heavy and mild fluctuation of the {{LoS}} component, respectively. The parameter $\varrho$ is set to $\varrho^2=0.1$, indicating a moderately large {{LoS}} power imbalance, and $\mu=1$. Let us first focus on Fig. \ref{PDF1}, where
we set $\kappa=1$ to indicate a weak {{LoS}} scenario on which the {{LoS}} and {{NLoS}} power is the same. We observe that the effect of increasing $\eta$ causes the amplitude values {to be} more concentrated around its mean value. Besides, compared to the case of $\eta=1$ (i.e. the $\kappa$-$\mu$ shadowed fading distribution), the effect of having a power imbalance in the {{NLoS}} component clearly has an impact on the distribution of the signal envelope. Differently from the $\eta$-$\mu$ fading model, the behavior of the distribution with respect to $\eta$ is no longer symmetrical between $\eta\in[0,1]$ and $\eta\in[1,\infty)$ for a fixed $\varrho\neq1$. One interesting effect comes from the observation of the effect of increasing $\eta$: both setting $\eta=0.1$ or $\eta=10$ implies that the {{NLoS}} power is imbalanced by a factor of 10. However, it is evident that if this {{NLoS}} imbalance goes to the component associated with a larger {{LoS}} imbalance ($\eta=0.1$ since we have $\varrho^2=0.1$), this is way more detrimental for the received signal envelope than having the {{NLoS}} imbalance in the other component.

Fig. \ref{PDF2} now considers a strong {{LoS}} scenario on which $\kappa=10$. The rest of the parameters are the same ones as in the previous figure. Because the {{LoS}} component is now much more relevant, the effect of changing $m$ is more noticeable. We observe that for $m=10$, which corresponds to a mild fluctuation on the {{LoS}} component, the shape of the PDF is only slightly altered when changing $\eta$. Conversely, the shape of the PDF is more affected by $\eta$ for low values of amplitude when $m=1$. This is further exemplified in Fig. \ref{PDFbimodal}, on which a bimodal behavior is observed as the imbalance is reduced through $\varrho$ or $\eta$. When both $\{\varrho,\eta\}$ { decrease}, the in-phase components have considerably {less} power than the quadrature components. Because $\kappa$ is sufficiently large, the distribution will mostly fluctuate close to the {{LoS}} part of the quadrature component due to $m$, and the first maximum on the PDF in the low-amplitude region appears as the highly imbalanced in-phase component only is able to contribute in this region. We must note that this bimodal behavior does not appear in the original $\kappa$-$\mu$ shadowed or Beckmann distributions from which the FB distribution originates. Nevertheless, such bimodality indeed {shows} in other fading models such as the $\alpha$-$\eta$-$\kappa$-$\mu$ \cite{Yacoub2016}, the two-wave with diffuse power \cite{Durgin2002}, the fluctuating two-ray \cite{Romero2016} and some others \cite{Laureano2017,Beaulieu2014}.

\begin{figure}\centering
	\includegraphics[width=.99\columnwidth]{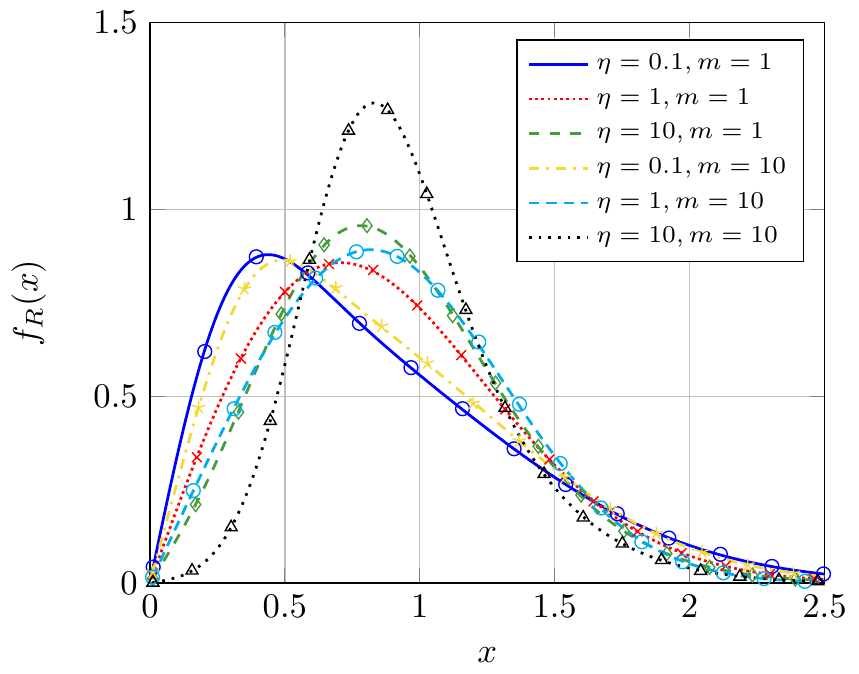}
	\caption{FB signal envelope distribution for different values of $\eta$ and $m$ in weak {{LoS}} scenario ($\kappa=1$) with $\varrho^2=0.1$, $\mu=1$ and
$\Omega=E\{R^2\}=1$. Solid lines correspond to the exact PDF, markers correspond Monte Carlo simulations.}
	\label{PDF1}
\end{figure}

\begin{figure}\centering
	\includegraphics[width=.99\columnwidth]{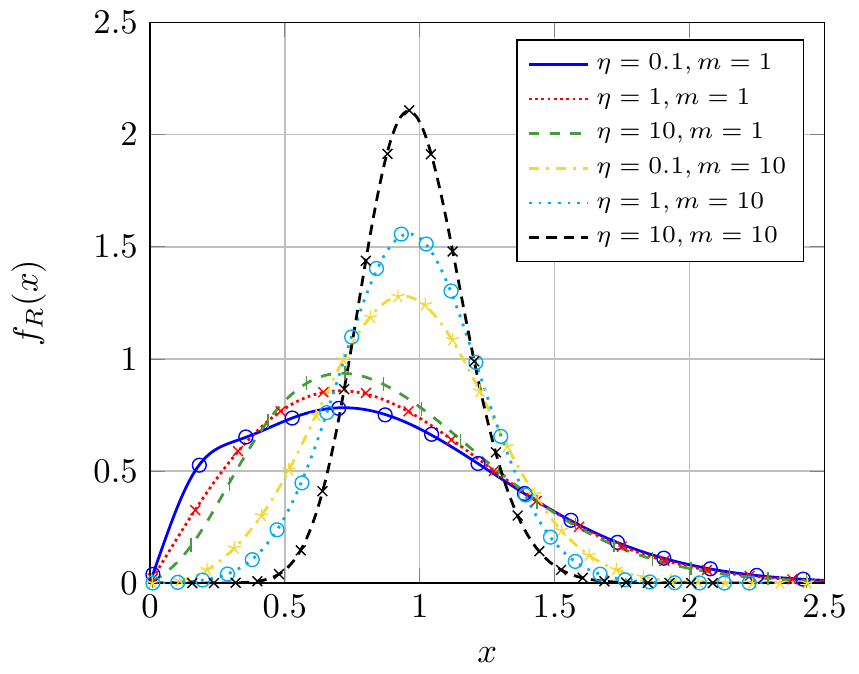}
	\caption{FB signal envelope distribution for different values of $\eta$ and $m$ in strong {{LoS}} scenario ($\kappa=10$) with $\varrho^2=0.1$, $\mu=1$ and
$\Omega=E\{R^2\}=1$. Solid lines correspond to the exact PDF, markers correspond Monte Carlo simulations.}
	\label{PDF2}
\end{figure}

\begin{figure}\centering
	\includegraphics[width=.99\columnwidth]{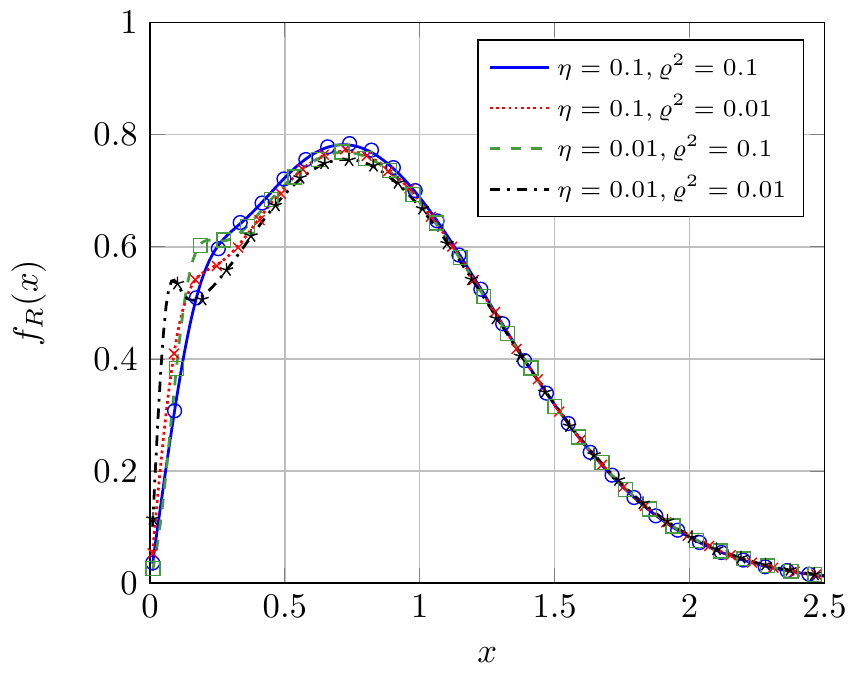}
	\caption{FB signal envelope distribution for different values of $\eta$ and $\varrho$ in strong {{LoS}} scenario ($\kappa=10$) with $m=1$, $\mu=1$ and
$\Omega=E\{R^2\}=1$. Solid lines correspond to the exact PDF, markers correspond Monte Carlo simulations.}
	\label{PDFbimodal}
\end{figure}

We represent in Figs. \ref{PDF3} and \ref{PDF4} the PDF of the received signal amplitude for different values of {{LoS}} power imbalance $\varrho$ and {{LoS}} fluctuation severity $m$. We first consider the weak {{LoS}} scenario with $\kappa=1$, and setting $\mu=2$ and $\eta=0.1$. We observe that low values of $\varrho$ and $m$ cause the amplitude values being more sparse. When the {{LoS}} component is stronger, i.e. $\kappa=10$ in Fig. \ref{PDF4} the effect of increasing $m$ (i.e. eliminating the {{LoS}} fluctuation) or $\varrho$ is more relevant.% although the trend is the same.

\begin{figure}\centering
	\includegraphics[width=.99\columnwidth]{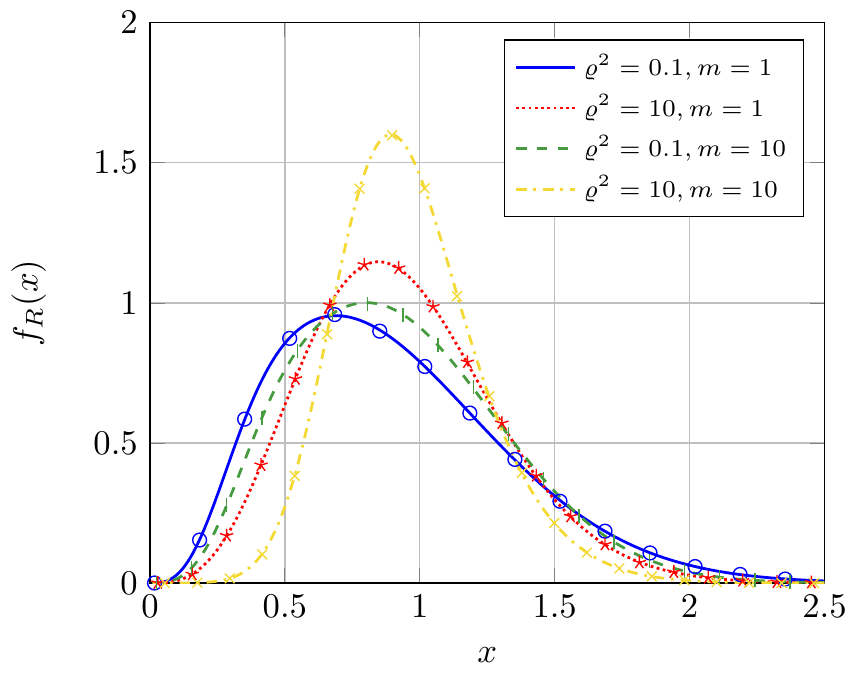}
	\caption{FB signal envelope distribution for different values of $\varrho$ and $m$ in weak {{LoS}} scenario ($\kappa=1$) with $\eta=0.1$, $\mu=2$ and
$\Omega=E\{R^2\}=1$. Solid lines correspond to the exact PDF derived, markers correspond Monte Carlo simulations.}
	\label{PDF3}
\end{figure}

\begin{figure}\centering
	\includegraphics[width=.99\columnwidth]{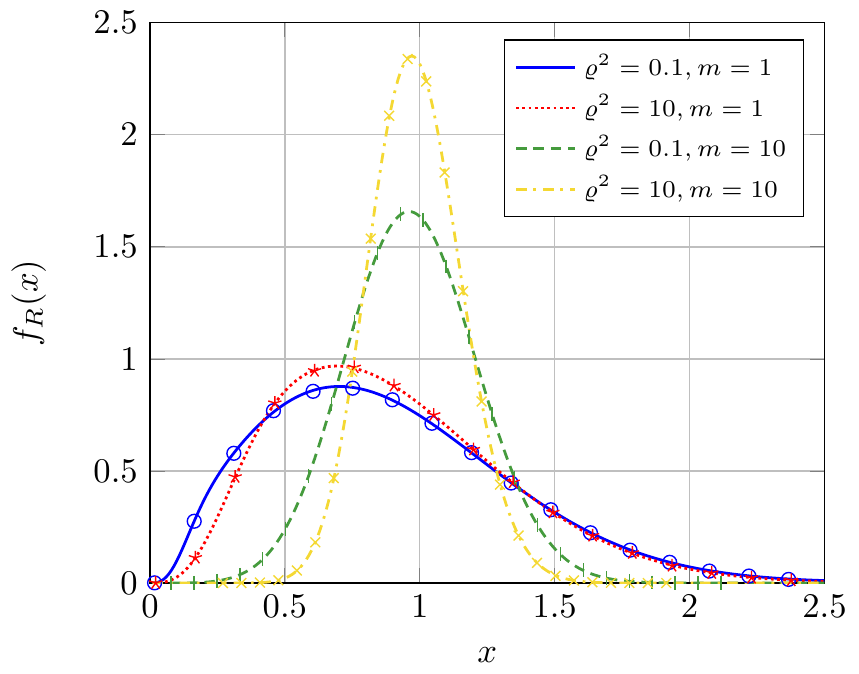}
	\caption{FB signal envelope distribution for different values of $\varrho$ and $m$ in strong {{LoS}} scenario ($\kappa=1$) with $\eta=0.1$, $\mu=2$ and
$\Omega=E\{R^2\}=1$. Solid lines correspond to the exact PDF, markers correspond Monte Carlo simulations.}
	\label{PDF4}
\end{figure}
%\clearpage

Figs. \ref{CDF1} and \ref{CDF2} are useful to understanding the effect of the parameters $\varrho$ and $\eta$ over the CDF. Spefically, in Fig. \ref{CDF1} we compare the shape of the CDF in weak and strong {{LoS}} scenarios as $\eta$ varies. The {{LoS}} fluctuation parameter is set to $m=10$ in order to eliminate its influence, whereas $\varrho^2=0.1$ and $\mu=1$. We observe that increasing either $\eta$ or $\kappa$ makes the slope of the CDF rise close to $x=1$. A similar observation can be made when inspecting Fig. \ref{CDF2}. We see that having the {{LoS}} and {{NLoS}} imbalances in the same component ($\varrho^2=\eta=0.1$) is more detrimental for the signal envelope, and the probability of having very low values of signal level is higher.

\begin{figure}\centering
	\includegraphics[width=.99\columnwidth]{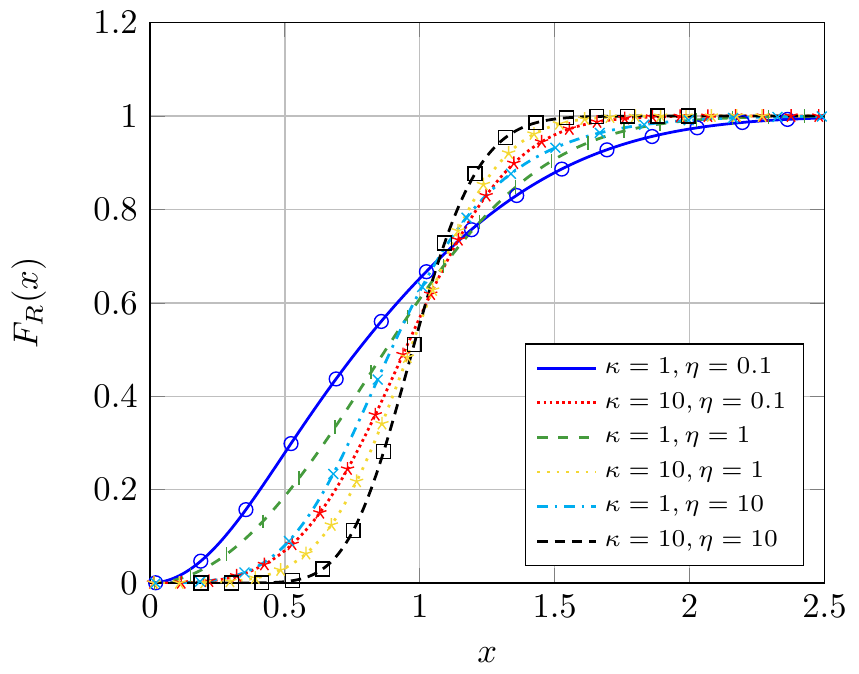}
	\caption{FB signal envelope CDF for different values of $\kappa$ and $\eta$ with $\varrho^2=0.1$, $\mu=1$, $m=10$ and
$\Omega=E\{R^2\}=1$. Solid lines correspond to the exact CDF, markers correspond Monte Carlo simulations.}
	\label{CDF1}
\end{figure}

\begin{figure}\centering
	\includegraphics[width=.99\columnwidth]{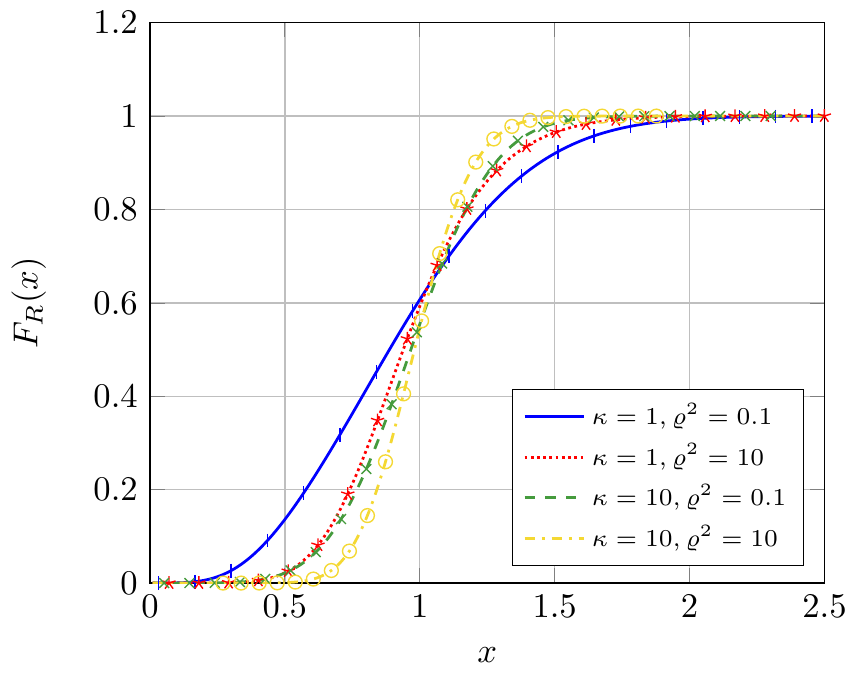}
	\caption{FB signal envelope CDF for different values of $\kappa$ and $\varrho$ with $\eta=0.1$, $\mu=2$, $m=10$ and
$\Omega=E\{R^2\}=1$. Solid lines correspond to the exact CDF, markers correspond Monte Carlo simulations.}
	\label{CDF2}
\end{figure}

\subsection{Second Order Statistics}

We will now investigate the effect of the FB fading parameters on the second-order statistics of the distribution. We assume that a time variation of the diffuse component according to Clarke's correlation model \cite{Stuber2011} with maximum Doppler shift $f_d$; this implies that $\sqrt{-\ddot{\rho}}=\sqrt{2}f_d\pi$ \cite{Ramos2009}. As argued in Section \ref{SecondOrder}, we consider that $\varrho\rightarrow\infty$ and hence the LCR and AFD are given by (\ref{eqLCRfinal}) and (\ref{eqAFDfinal}). Monte Carlo simulations are also included, by generating a sampled fluctuating Beckmann random process with sampling period $T_s>>f_d$ in order to avoid missing level crossings at very low threshold values \cite{Lopez2012}.

Fig. \ref{LCRfig} represents the LCR vs the normalized threshold for different sets of fading parameter values. When increasing $\mu$, i.e. the number of multipath clusters, the number of crossings at very low threshold values is drastically reduced. Similarly, the number of crossings in this region grows when reducing $\kappa$ or increasing $\eta$. This latter effect is coherent with the fact that $\varrho\rightarrow\infty$ in this case, so that having a value of $\eta<1$ is beneficial in terms of fading severity. Thus, the maximum number of crossings for low threshold values in the investigated scenarios is attained for low $\mu$ and $\kappa$, and large $\eta$.

\begin{figure}\centering
	\includegraphics[width=.99\columnwidth]{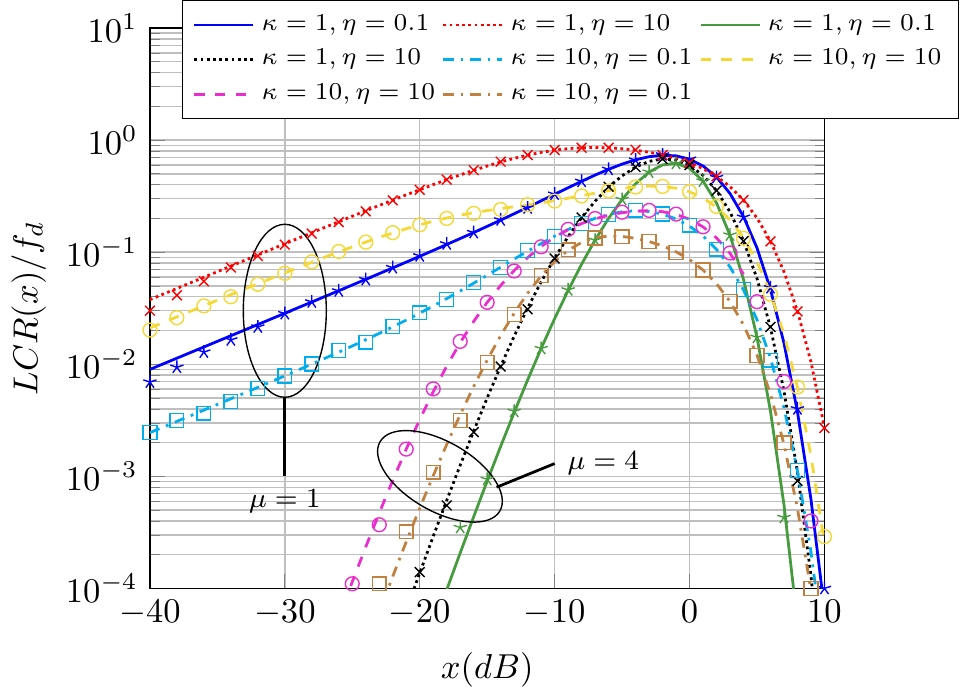}
	\caption{Normalized LCR vs threshold value $x$ (dB) normalized to $\Omega$ for different values of $\kappa$, $\eta$ and $\mu$, with $m=1$ and $\varrho\rightarrow\infty$. Solid lines correspond to the exact LCR, markers correspond Monte Carlo simulations.}
	\label{LCRfig}
\end{figure}

Fig. \ref{AFDfig} represents the AFD vs the normalized threshold for the same set of fading parameter values as in Fig. \ref{LCRfig}. Interestingly, we see that the duration of deep fades is not affected by $\eta$. We also observe that a larger AFD is associated with a lower value of $\mu$ and a larger value of $\kappa$; this is in coherence with the observations in \cite{Cotton2007} for the particular case of the $\kappa$-$\mu$ fading model.

\begin{figure}\centering
	\includegraphics[width=.99\columnwidth]{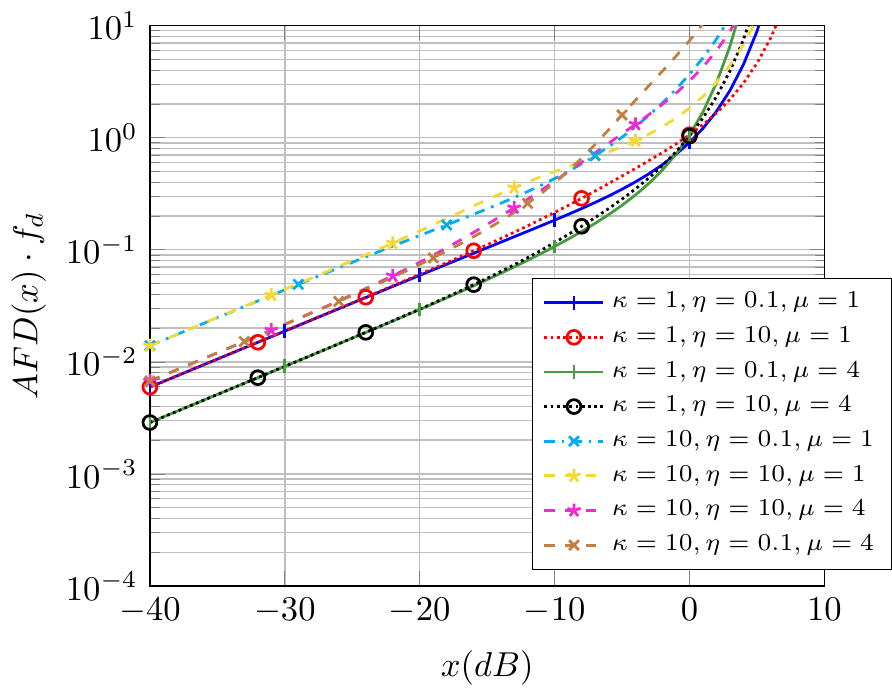}
	\caption{Normalized AFD vs threshold value $x$ (dB) normalized to $\Omega$ for different values of $\kappa$, $\eta$ and $\mu$, with $m=1$ and $\varrho\rightarrow\infty$. Solid lines correspond to the exact AFD, markers correspond Monte Carlo simulations.}
	\label{AFDfig}
\end{figure}

\begin{figure*}[!t]
%\setcounter{equation}{33}
% ensure that we have normalsize text
\setcounter{equation}{27}

\normalsize
\begin{align}
\label{BER1}
	P_s(\bar{\gamma})=&\frac{1}{2\left(1+\frac{2\eta}{\mu(1+\eta)(1+\kappa)}\bar{\gamma} \right)^{\mu/2} \left(1+\frac{2}{\mu(1+\eta)(1+\kappa)}\bar{\gamma} \right)^{\mu/2}}\times\nonumber\\&\left[1+\frac{1}{m}\left(\frac{\mu\kappa\left( \frac{\varrho^2}{1+\varrho^2}\right)(1+\eta)\bar{\gamma}}{(1+\eta)(1+\kappa)\mu + 2\eta\bar{\gamma}}\right.\right.\left.\left.+  \frac{\mu\kappa\left( \frac{1}{1+\varrho^2}\right)(1+\eta)\bar{\gamma}}{(1+\eta)(1+\kappa)\mu + 2\bar{\gamma}}\right) \right]^{-m}
\end{align}

\setcounter{equation}{29}
\begin{align}
\label{BER2}
	P_s(\bar{\gamma})=&\sum_{n=1}^{N-1}(-1)^{n+1}\binom{N-1}{n}\frac{1}{n+1} \times\frac{1}{\left(1+\frac{2n\eta}{(n+1)\mu(1+\eta)(1+\kappa)}\bar{\gamma} \right)^{\mu/2}}\times
	 \frac{1}{\left(1+\frac{2n}{(n+1)\mu(1+\eta)(1+\kappa)}\bar{\gamma} \right)^{\mu/2}}  \notag \\
	&\times\left[1+\frac{1}{m}\left(\frac{\mu\kappa\left( \frac{\varrho^2}{1+\varrho^2}\right)(1+\eta)\bar{\gamma}\left(\frac{n}{n+1}\right)}{(1+\eta)(1+\kappa)\mu + 2\eta\bar{\gamma}\left(\frac{n}{n+1}\right)}\right.\right. \left.\left.+  \frac{\mu\kappa\left( \frac{1}{1+\varrho^2}\right)(1+\eta)\bar{\gamma}\left(\frac{n}{n+1}\right)}{(1+\eta)(1+\kappa)\mu + 2\bar{\gamma}\left(\frac{n}{n+1}\right)}\right) \right]^{-m}
\end{align}

\hrulefill
\hrulefill
% Restore the current equation number.
%\setcounter{equation}{\value{MYtempeqncnt}}
% The IEEE uses as a separator
%\hrulefill
% The spacer can be tweaked to stop underfull vboxes.
\vspace*{4pt}
\end{figure*}

%------------COHERENT DBPSK SUBSECTION------------------------------------------------
\subsection{Error Probability Analysis}
\setcounter{equation}{25}

We now exemplify how the performance analysis of wireless communication systems operating under FB fading can be carried out. For the sake of simplicity, we here focus on the symbol error probability (SEP) analysis for a number of well-known modulation schemes.

The SEP in the presence of fading is known to be given by
\begin{equation}
	P_s(\bar{\gamma}) = \int_0^{\infty} P_{AWGN}(\gamma)f_{\gamma}(\gamma)d\gamma
\end{equation}
where $P_{AWGN}(\gamma)$ is the symbol error probability in the AWGN case \cite[eq. 8.85]{AlouiniBook}. When using coherent DBPSK (Differential Binary Phase-.Shift Keying) modulation, the SEP can be expressed in terms of the MGF as 
\begin{equation}
P_s(\bar{\gamma}) =\nolinebreak \frac{1}{2}M_{\gamma}(s)\rvert_{s=-1}.
\end{equation}
Thus, the SEP of DPBSK when assuming the FB fading model is given in \eqref{BER1} at the top of this page.
%
%\begin{align}
%	P_s(\bar{\gamma})=&\frac{1}{2\left(1+\frac{2\eta}{\mu(1+\eta)(1+\kappa)}\bar{\gamma} \right)^{\mu/2} \left(1+\frac{2}{\mu(1+\eta)(1+\kappa)}\bar{\gamma} \right)^{\mu/2}}\times\nonumber\\&\left[1+\frac{1}{m}\left(\frac{\mu\kappa\left( \frac{\varrho^2}{1+\varrho^2}\right)(1+\eta)\bar{\gamma}}{(1+\eta)(1+\kappa)\mu + 2\eta\bar{\gamma}}\right.\right.\left.\left.+  \frac{\mu\kappa\left( \frac{1}{1+\varrho^2}\right)(1+\eta)\bar{\gamma}}{(1+\eta)(1+\kappa)\mu + 2\bar{\gamma}}\right) \right]^{-m}
%\end{align}

%------------NON-COHERENT M-FSK SUBSECTION---------------------------------------------
%\subsection{Non-coherent M-FSK}
\setcounter{equation}{28}

In the case of assuming orthogonal  $M$-ary FSK (Frequency-Shift Keying) signals and non-coherent demodulation, the symbol error probability over AWGN channels is given in \cite[eq. 8.67]{AlouiniBook} as
\begin{equation}
	P_s(\bar{\gamma}) = \sum_{n=1}^{M-1}(-1)^{n+1}\binom{M-1}{n}\left.\frac{1}{n+1}M_{\gamma}(s)\right|_{s=\frac{-n}{n+1}},
\end{equation}
yielding the following expression when assuming the FB fading model in \eqref{BER2}.
%
%\begin{align}
%	P_s(\bar{\gamma})=&\sum_{n=1}^{N-1}(-1)^{n+1}\binom{N-1}{n}\frac{1}{n+1} \times\frac{1}{\left(1+\frac{2n\eta}{(n+1)\mu(1+\eta)(1+\kappa)}\bar{\gamma} \right)^{\mu/2}}\times
%	 \frac{1}{\left(1+\frac{2n}{(n+1)\mu(1+\eta)(1+\kappa)}\bar{\gamma} \right)^{\mu/2}}  \notag \\
%	%
%	&\times\left[1+\frac{1}{m}\left(\frac{\mu\kappa\left( \frac{\varrho^2}{1+\varrho^2}\right)(1+\eta)\bar{\gamma}\left(\frac{n}{n+1}\right)}{(1+\eta)(1+\kappa)\mu + 2\eta\bar{\gamma}\left(\frac{n}{n+1}\right)}\right.\right. \left.\left.+  \frac{\mu\kappa\left( \frac{1}{1+\varrho^2}\right)(1+\eta)\bar{\gamma}\left(\frac{n}{n+1}\right)}{(1+\eta)(1+\kappa)\mu + 2\bar{\gamma}\left(\frac{n}{n+1}\right)}\right) \right]^{-m}
%\end{align}

The SEP is evaluated in Fig. \ref{SEP_fig}, assuming coherent DBPSK, and non-coherent $2$-FSK and $4$-FSK. We observe that the SEP performance of DBPSK is much better than the non-coherent schemes, especially when the fading severity is reduced (i.e. large $\kappa$ and $\eta$, for $\varrho<1$)

\begin{figure}\centering
	\includegraphics[width=.99\columnwidth]{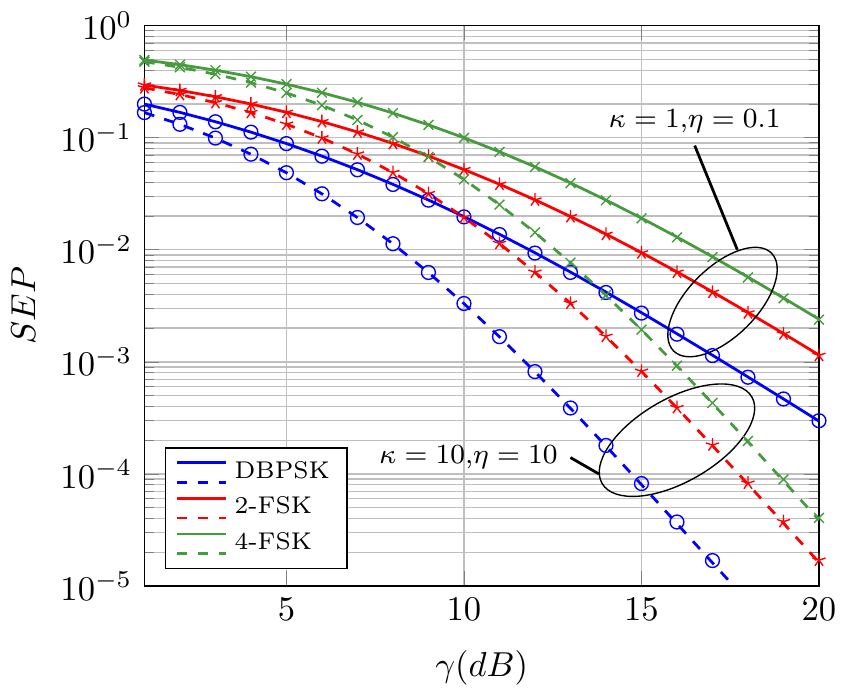}
	\caption{SEP vs. $\gamma$ for different values of $\kappa$ and $\eta$ and different modulation schemes. Parameter values are $m=4$, $\mu=2$ and $\varrho^2=0.2$. Solid lines correspond to the exact SEP, markers correspond Monte Carlo simulations.}.
	\label{SEP_fig}
\end{figure}
% m = 4, mu = 2, \varrho^2 = 0.2
\setcounter{equation}{30}

%--------------------------------------------------------------------------------------
% CONCLUSION SECTION
%--------------------------------------------------------------------------------------
\section{Conclusion}
\label{ConclusionSection}

We presented an extension of the $\kappa$-$\mu$ shadowed fading distribution, by including the effects of power imbalance between the {{LoS}} and {{NLoS}} components through two additional parameters, $\varrho$ and $\eta$, respectively. This generalization also includes the classical and notoriously unwieldy Beckmann fading distribution as special case, with the advantage of admitting a relatively simple analytical characterization when compared to state-of-the-art fading models. Thus, we are able to unify a wide set of fading models in the literature under the umbrella of a more general model, for which we suggest the name of Fluctuating Beckmann fading model.

We observed that when the {{LoS}} and {{NLoS}} imbalances are both large for the same component (i.e. $\varrho<1$ and $\eta<1$ for the in-phase component, or $\varrho>1$ and $\eta>1$ for the quadrature component), the fading severity is increased. Conversely, when the {{LoS}} imbalance is larger in one component (e.g. $\varrho<1$) it is beneficial that its {{NLoS}} part has {less} power (i.e. $\eta>1$ in this case) in order to reduce fading severity. Strikingly and somehow counterintuitively, the FB distribution exhibits a bimodal behavior in some specific scenarios, unlike the distributions from which it originates.

%--------------------------------------------------------------------------------------
% ACKNOWLEDGMENT SECTION
%--------------------------------------------------------------------------------------
%\section*{Acknowledgment}
\appendices

\section{Proof of Lemma I}
\label{app1}
Let us consider the physical model in \eqref{Model}. Specializing for $\mu = \nolinebreak 1$, the conditional MGF of the signal power W given $\xi$ follows a Beckmann distribution with MGF given by \cite[eq. 2.38]{AlouiniBook}
\begin{align}
\label{MGF_Beckmann}
M_{W}(s | \xi)=&\frac{1}{(1-2\sigma^2_xs)^{1/2}(1-2\sigma^2_ys)^{1/2}} \times\notag\\ &\exp{\left( \frac{p_1^2\xi s}{1 - 2\sigma_x^2 s} + 
													\frac{q_1^2\xi s}{1 - 2\sigma_y^2 s}\right)}
\end{align}

Since the Gaussian processes within \eqref{Model} are mutually independent, then the conditional moment-generating function of the FB distribution can be obtained by multiplying the $\mu$ terms of the sum. Thus, the conditional MGF of the signal power W is given by
\begin{align}
\label{MGF_Ext_Cond}
M_{W}(s | \xi)=&\frac{1}{(1-2\sigma^2_xs)^{\mu/2}(1-2\sigma^2_ys)^{\mu/2}} \times\notag\\&\exp{\left( \frac{p^2\xi s}{1 - 2\sigma_x^2 s} + 
													\frac{q^2\xi s}{1 - 2\sigma_y^2 s}\right)}
\end{align}
where $p^2=\sum_{i=1}^{\mu}p_i^2$ and $q^2=\sum_{i=1}^{\mu}q_i^2$.

With the definitions in (\ref{eqDef}), the conditional MGF in \eqref{MGF_Ext_Cond} can be rewritten as:
\begin{align}
\label{MGF_new_param}
	&M_{\gamma}(s | \xi)=\frac{1}{\left(1-\frac{2\eta}{\mu(1+\eta)(1+\kappa)}\bar{\gamma s} \right)^{\mu/2} \left(1-\frac{2}{\mu(1+\eta)(1+\kappa)}\bar{\gamma s} \right)^{\mu/2}}  \notag \\
	&\times exp\left(\frac{\mu\kappa\left( \frac{\varrho^2}{1+\varrho^2}\right)(1+\eta)\xi\bar{\gamma}s}{(1+\eta)(1+\kappa)\mu - 2\eta\bar{\gamma} s} +  \frac{\mu\kappa\left( \frac{1}{1+\varrho^2}\right)(1+\eta)\xi\bar{\gamma}s}{(1+\eta)(1+\kappa)\mu - 2\bar{\gamma} s}\right)
\end{align}
Finally, the unconditional MGF for the FB fading model can be obtained by averaging \eqref{MGF_new_param} as
\begin{equation}
	M_{\gamma}(s)=\int_0^{\infty} M_{\gamma}(s | \xi) \, f_{\xi}(\xi) \, d\xi
\end{equation}
where $f_{\xi}(\xi)$ is the Nakagami-$m$ PDF, yielding (\ref{MGF_fin})
%--------------------------------------------------------------------------------------
% REFERENCES
%--------------------------------------------------------------------------------------
\vspace{10mm}
\bibliographystyle{IEEEtran}
\bibliography{bibjavi}
%\bibliography{bibjavi,Mi_bibliografia,IEEEabrv}

%\begin{thebibliography}{1}
%
%\bibitem{Paris'14}
%J. F. Paris, `Statistical Characterization of $\kappa$-$\mu$ Shadowed Fading´, \textit{IEEE Trans. Veh. Technol.}, vol. 60 , no. 2, pp. 518-526, February 2014.
%
%\bibitem{Alouini'05}
%M. K. Simon and M-S Alouini, \textit{Digital Communications over Fading Channels}, 2nd ed., John Wiley, 2005.
%
%\bibitem{Srivastava'85}
%H. M. Srivastava and P. W. Karlsson, \textit{Multiple Gaussian Hypergeometric Series}, John Wiley \& Sons, 1985.
%
%
%
%\end{thebibliography}

%--------------------------------------------------------------------------------------
% BIOGRAPHY
%--------------------------------------------------------------------------------------
%\begin{IEEEbiography}%[{\includegraphics[width=1in,height=1.25in,clip,keepaspectratio]{picture}}]
%{Author}
%Some biography text...
%\end{IEEEbiography}

\end{document}